\newtheorem{theorem}{Theorem}
\title{Online Auction-Based Incentive Mechanism Design for Horizontal Federated Learning with Budget Constraint}
\author{
Jingwen Zhang %$^1$
\and
Yuezhou Wu \And 
Rong Pan
% Second Author$^2$\and
% Third Author$^{2,3}$\And
% Fourth Author$^4$
% \affiliations
% $^1$First Affiliation\\
% $^2$Second Affiliation\\
% $^3$Third Affiliation\\
% $^4$Fourth Affiliation
% \emails
% \{first, second\}@example.com,
% third@other.example.com,
% fourth@example.com
}
\begin{document}

\maketitle

\begin{abstract}
Federated learning makes it possible for all parties with data isolation to train the model collaboratively and efficiently while satisfying privacy protection. To obtain a high-quality model, an incentive mechanism is necessary to motivate more high-quality workers with data and computing power. The existing incentive mechanisms are applied in offline scenarios, where the task publisher collects all bids and selects workers before the task. However, it is practical that different workers arrive online in different orders before or during the task. Therefore, we propose a reverse auction-based online incentive mechanism for horizontal federated learning with budget constraint. Workers submit bids when they arrive online. The task publisher with a limited budget leverages the information of the arrived workers to decide on whether to select the new worker. Theoretical analysis proves that our mechanism satisfies budget feasibility, computational efficiency, individual rationality, consumer sovereignty, time truthfulness, and cost truthfulness with a sufficient budget. The experimental results show that our online mechanism is efficient and can obtain high-quality models.
\end{abstract}

\section{Introduction}
Federated Learning (FL) is a distributed machine learning framework that satisfies privacy protection, data security, and government laws~\cite{yang2019federated}. By sharing model parameters instead of data, all parties collaborate to train the model, effectively breaking down data silos~\cite{bonawitz2019towards}. Sufficient high-quality workers are the key to the success of FL. Existing researches assume that workers serve for free~\cite{yu2020fairness}. However, resource consumption and privacy leakage risks make workers reluctant to participate in FL. Restrictions on data exchange leave task publishers with no effective means of selecting and paying high-quality workers~\cite{zhang2021incentive}. Therefore, an incentive mechanism is necessary for FL.

Existing researches on incentive mechanisms~\cite{le2021incentive,zeng2020fmore} for federated learning are applied to offline scenarios, meaning that workers are selected before the task, and once the task starts, the task publisher no longer accepts bids from newly arrived workers. However, it is practical that workers arrive online at different times~\cite{zhao2014crowdsource}. Waiting for all interested workers to arrive before the task can result in task delay and impairment of benefit.

We design an online reverse auction-based incentive mechanism for horizontal federated learning with budget constraint and leverage the reputation proposed by Zhang~\shortcite{zhang2022auctionbased} to indirectly reflect the quality and reliability of workers. We divide FL task into multiple time steps according to global iterations. Workers arrive at different time steps in different orders before or during tasks. At the first time step, the publisher selects arrived workers by combining their bids and reputations through the offline proportional share mechanism~\cite{singer2010budget}. At other time steps, arrived workers are divided into two groups, which are used to mutually estimate payment density thresholds. Workers whose unit reputation bid prices do not exceed the corresponding payment density threshold are selected. Theoretical analysis proves that our mechanism satisfies budget feasibility, computational efficiency, individual rationality, consumer sovereignty, time truthfulness and cost truthfulness with sufficient budget. The experimental results show that it helps to obtain high-quality models.

The rest of the paper is organized as follows. Section~\ref{sec:related} introduces related work. Section~\ref{sec:system_model_pro_def} describes the system model and problem definition. Section~\ref{sec:online_mechanism} designs the mechanism in detail. Section~\ref{sec:analysis} conducts theoretical analysis and Section~\ref{sec:experiments} shows the simulation results.

\section{Related Work}
\label{sec:related}
Jiao \textit{et al.}~\shortcite{jiao2020toward} develops a mechanism based on multi-dimensional reverse auction, which takes into account the data quantity and distribution of workers. Ying \textit{et al.}~\shortcite{ying2020double} proposes a framework SHIELD based on reverse auction and differential privacy. The probability of being selected is based on the bid price. Roy \textit{et al.}~\shortcite{roy2021distributed} considers the cost, QoE, and reputation to select workers through reverse auction and reputation. Zhang \textit{et al.}~\shortcite{zhang2021incentive} prioritizes workers with lower unit reputation bid prices. Zhou \textit{et al.}~\shortcite{zhou2021truthful} selects workers using a greedy algorithm with the goal of minimizing social costs through the reverse auction. Seo \textit{et al.}~\shortcite{seo2021auction} proposes an auction-based method to help task publishers dynamically select resource-efficient workers. To improve model accuracy and communication efficiency, Pandey \textit{et al.}~\shortcite{pandey2020crowdsourcing} models the interaction between the publisher and workers as a two-stage Stackelberg game. Feng \textit{et al.}~\shortcite{feng2019joint} uses Stackelberg game. Workers decide the unit price of data, while the publisher decide the amount of data. Ding \textit{et al.}~\shortcite{ding2020incentive} uses a multi-dimensional contract as an incentive mechanism, taking into account training costs and communication delays. Lim \textit{et al.}~\shortcite{lim2021towards} establishes multiple contract items, and the worker selects a contract item and gets paid. All of the above are applied to offline scenarios rather than online scenarios.

\section{System Model and Problem Definition}
\label{sec:system_model_pro_def}
\subsection{System Model}
A federated learning system includes a task publisher and many workers. Since his data is not enough, the publisher recruits workers to train a high-quality model through FL. The data of the task publisher can be used as the validation set and test set. Workers may be users of smart devices with data and computing power. Due to daily life, workers have accumulated lots of data. The publisher with a budget of $B$ publishes a FL task with $T$ rounds global iterations and scheduled start time. Each interested worker $i$ that arrives at different times, submits a sealed bid price to the task publisher. Each worker $i$ has a true arrival time $a_i \in \{1,...,T\}$, the true cost $c_i$ of a global iteration, as well as the quantity and quality of data, but all are private. Only the budget $B$, the number of global iterations $T$, and the reputation $Re_i$ are public.

We model the interaction between the publisher and workers as an online reverse auction model. At his arrival time $\hat{a_i}$, worker $i$ submits the bid price $b_i$ to the publisher, which represents the claimed cost of a global iteration. Since in reality the worker is strategic and tries to maximize his utility, his arrival time $\hat{a_i}$ and bid price $b_i$ may be different from his true arrival time $a_i$ and cost $c_i$. We assume that workers will not claim an arrival time $\hat{a_i}$ earlier than the true arrival time $a_i$, that is, $\hat{a_i} \geq a_i$. That is because if a worker lied about an earlier arrival time and was selected, the publisher could find that he did not upload a local model. We also assume that the workers are independent and cannot collude with each other. Due to the uneven abilities, the quantity and quality of data of different workers may be different. To obtain a higher-quality model with a limited budget, the publisher will select high-quality workers from the workers who arrive online. Since the reputation $Re_i$ indirectly reflects the quality and reliability of the worker, the task publisher selects the workers and decides to pay online based on the bid price and reputation.

\begin{figure}[htbp]
  \centering
  \centerline{\includegraphics[width=\linewidth]{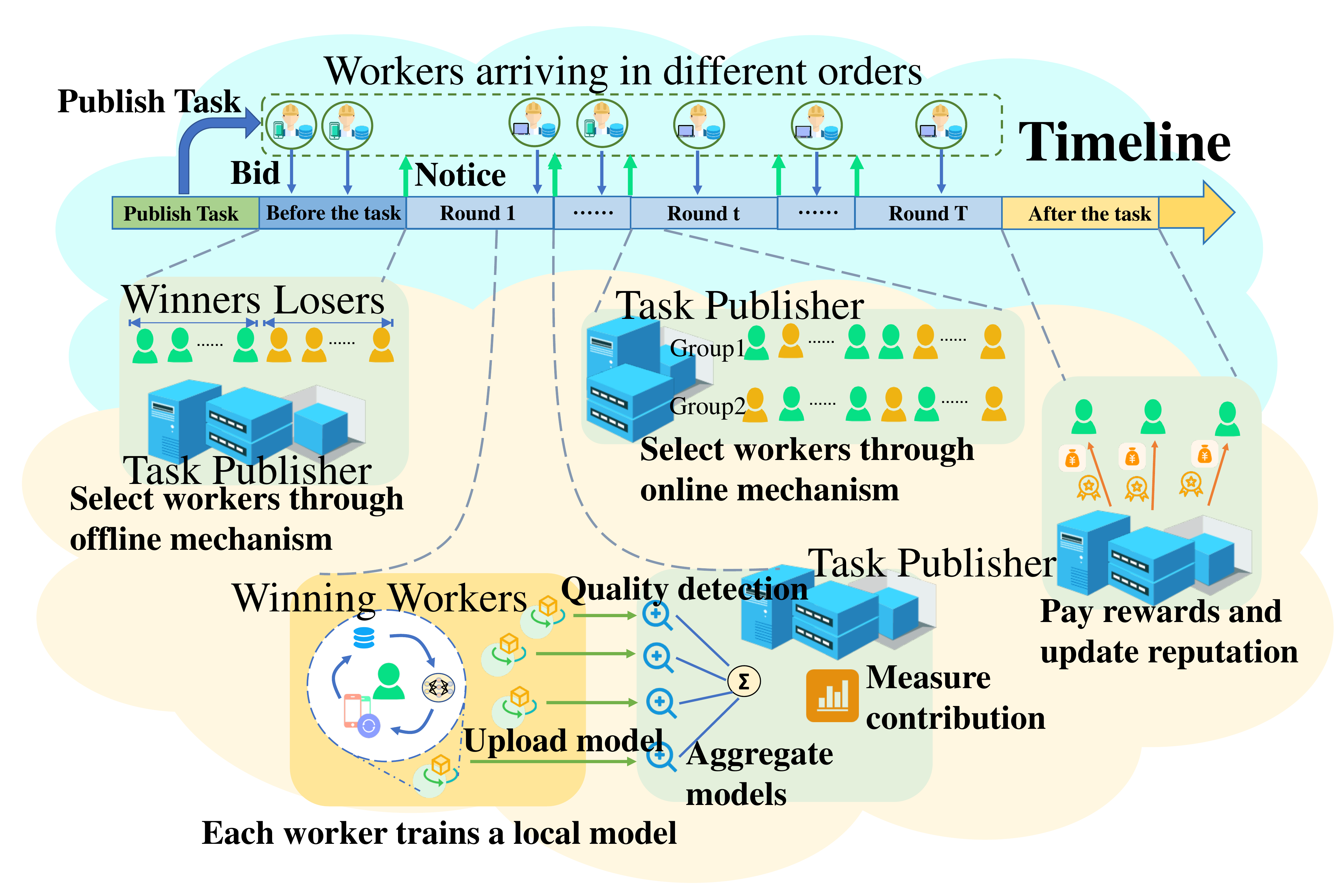}}
  \caption{System Model}
  \label{fig:system_model}
\end{figure}
Figure~\ref{fig:system_model} shows the workflow. First, the publisher publishes a FL task, announcing the budget $B$, the number of global iterations $T$, the scheduled start time, the model structure, data type, and other requirements. According to the strategy, the interested workers submit their arrival time $\hat{a_i}$ and bid price $b_i$ to the publisher one by one online. At the scheduled start time, the task publisher collects all bids of workers with $\hat{a_i} = 1$ and selects them through the offline reverse auction mechanism by combining the bid prices and reputation. If the number of selected workers is less than the minimum number required to start the task, the publisher will delay the start time and re-execute the worker selection phase at the new start time. The winning workers participate in all remaining global iterations. The unselected workers will not leave but wait for the next worker selection phase before the next global iteration. The winning worker starts a round of global iteration. The task publisher notifies the winning workers and distributes the initial global model to them. The winning worker uses his data to train the local model based on the downloaded global model and then uploads the local model. The publisher aggregates all local models to obtain a new global model. At this point, a global iteration is over. Before each remaining global iteration $t> 1$, the publisher collects the bids of newly arrived workers and previously unselected workers, combined with the reputation, and selects new ones through an online reverse auction mechanism with budget constraint. Then the global iteration $t$ starts. After $T$ global iterations, the task publisher updates the reputation of the winning worker $i$ and pays him $p_i$.

\subsection{Problem Definition}
To select cost-effective workers online with a limited budget $B$ and obtain a higher-quality model after $T$ global iterations, it is necessary to design an incentive mechanism $\mathbb{M}(\mathbf{f},\mathbf{p})$, including an online selection mechanism $\mathbf{f}$ and a payment mechanism $\mathbf{p}$. Suppose the selected workers form the set $S$. Define the selected time step of the worker $i$ as $t_i = t$, if the worker $i$ is selected at the $t$-th time step. The utility $u_i$ of the worker $i$ is
\begin{equation}
    u_i = 
    \left\{
        \begin{aligned}
        0  &, & i \notin S, \\
        p_i - c_i \cdot (T - t_i + 1) & , & i \in S.
        \end{aligned}
    \right.
\end{equation}
Since the budget of the task publisher is limited, $\sum_{i \in S}p_i \leq B$ needs to be met. The task publisher hopes to select as many high-quality workers as possible with the limited budget. Since reputation indirectly reflects the quality and reliability of workers, the utility $\mathcal{U}$ of the publisher is
\begin{equation}
    \mathcal{U} = \sum_{i \in S} Re_i(T - t_i + 1).
\end{equation}

The online mechanism $\mathbb{M}$ is designed to maximize the utility $\mathcal{U}$ of the publisher by determining the set $S$ with the budget constraint. At the same time, the mechanism needs to meet the following six critical economic properties.
\begin{itemize}
    \item \textbf{Individual Rationality}: The worker cannot be paid less than his true cost.
    \item \textbf{Budget Feasibility}: The sum of the rewards paid by the task publisher to the workers cannot exceed the budget.
    \item \textbf{Computational Efficiency}: The time complexity of the mechanism is polynomial.
    \item \textbf{Consumer Sovereignty}: The publisher cannot arbitrarily exclude any worker. As long as their bids are low enough, workers have chances to be selected and paid.
    \item \textbf{Cost Truthfulness with Sufficient Budget}: When the remaining budget is sufficient, the worker can maximize his utility by submitting his true cost.
    \item \textbf{Time Truthfulness with Sufficient Budget}: When the remaining budget is sufficient, the worker can maximize his utility by submitting his true arrival time.
\end{itemize}

\section{Online Incentive Mechanism Design}
\label{sec:online_mechanism}
Designing an online incentive mechanism for federated learning requires overcoming many challenges. The worker may lie to the publisher about his arrival time and cost for greater benefit. Thus, the mechanism should be designed to motivate workers to report true information, which helps the publisher make decisions. Moreover, the total rewards paid to workers cannot exceed the budget. Furthermore, the mechanism needs to be applied to the scenario where workers arrive online in different orders. In the offline scenario, the publisher has collected bids submitted by all workers, so it is easy to make a selection decision. In the online scenario, the publisher cannot collect bids from all workers in advance but only receives online bids from different workers before or during the task. With incomplete information, it is difficult to make a decision on whether to select the arrived worker or not.

In existing incentive mechanisms for federated learning or others~\cite{deng2021fair}, a certain indicator of selected workers will not exceed a payment density threshold. For example, in RRAFL~\cite{zhang2021incentive} the payment density threshold is $\frac{b_{k+1}}{Re_{k+1}}$, while in the proportional share mechanism~\cite{singer2010budget} it is $\min(\frac{B}{\sum_{i \in S}Re_i}, \frac{b_{k+1}}{Re_{k+1 }})$. Inspired by this, in the online scenario, the publisher can make decisions by learning a payment density threshold.

A two-stage sampling-accepting process is utilized to learn the payment density threshold~\cite{babaioff2008online}. The first stage rejects workers and only collects their bids as samples to learn the threshold. The second stage leverages this threshold for worker selection. However, this mechanism does not satisfy consumer sovereignty since the workers who bid in the first stage cannot win no matter how they bid. Workers are more inclined to delay arrivals, which will lead to task starvation. In addition, since the threshold is only learned using samples from the first stage and not updated in later stages, it may lack accuracy.

Zhao \textit{et al.}~\shortcite{zhao2014crowdsource} proposes a multi-stage sampling-accepting process to solve the above problem. The task is divided into multiple stages. At each stage, bids from workers who have left in previous stages are added to the sample set. The payment density threshold is updated by dynamically increasing the sample set size $U^{\prime}$ and the sample budget $B^{\prime}$. This approach satisfies consumer sovereignty without causing task starvation while making the threshold more accurate by dynamically updating. However, we assume that once the worker submits a bid, whether he is selected or not, he will stay until the end of the task. There is no way to update the threshold by adding departing workers to the sample set as above. Therefore, this approach cannot be applied directly.

In federated learning, the workers selected at different stages participate in different numbers of global iterations. Only a sufficient number of workers are selected, can the FL task start. In the crowdsensing task, workers selected in different stages perform the same independent task. Their workload is the same. Moreover, workers can work immediately when they are selected. The incentive mechanism for crowdsensing cannot be directly applied to federated learning. 

We propose an online incentive mechanism $\mathbb{M}(\mathbf{f}, \mathbf{p})$ according to the nature of FL and design a method to learn the payment density threshold. To obtain a higher-quality model with a limited budget $B$, on the one hand, the publisher selects more workers with lower bid prices $b_i$, and on the other hand, selects higher-quality workers with higher reputation $Re_i$. Balancing the worker's bid price $b_i$ and the reputation $Re_i$, define the cost density $\rho_i$ of worker $i$ as
\begin{equation}
    \rho_i = \frac{b_i}{Re_i}.
\end{equation}
\begin{algorithm}[b]
    \caption{GetPaymentDensityThreshold}
    \label{alg:get_density_threshold}
    \textbf{Input}: Sample budget $B^{\prime}$; Sample worker set $U^{\prime}$;\\
    \textbf{Output}: Density threshold $\rho^{\ast}$;
    \begin{algorithmic}[1]
        \STATE Sort the workers in $U^{\prime}$ such that $\frac{b_1}{Re_1} \leq ... \leq \frac{b_{|U^{\prime}|}}{Re_{ |U^{\prime}|}}$;
        \label{alg:get_density_threshold:sort}
        \STATE $S^{\prime} = \phi$; i = 1; 
        \WHILE {$\frac{b_i}{Re_i} \leq \frac{B^{\prime}}{Re_i + \sum_{j \in S^{\prime}}Re_j}$}
        \label{alg:get_density_threshold:choose_start}
            \STATE $S^{\prime} = S^{\prime} \cup \{i\}$; $i = i + 1$;
        \ENDWHILE
        \label{alg:get_density_threshold:choose_end}
        \STATE $k = i - 1$;
        \STATE $\rho^{\ast} = \min(\frac{B^{\prime}}{\sum_{j \in S^{\prime}} Re_j}, \frac{b_{k+1}}{Re_{k+1}} )$;
        \STATE \textbf{Return} $\rho^{\ast}$;
    \end{algorithmic}
\end{algorithm}
We naturally use the offline proportional share mechanism~\cite{singer2010budget} to find out the payment density threshold based on the sample set and sample budget, as shown in Algorithm \ref{alg:get_density_threshold}. Algorithm \ref{alg:get_density_threshold} adopts a greedy strategy, which first arranges the workers in the sample set $U^{\prime}$ in the order of increasing cost density, that is $\rho_1 \leq ... \leq \rho_k \leq \rho_{k+1} \leq ... \leq \rho_{|U^{\prime}|}$. According to the proportional share allocation rule, we find the last worker $k$ in the sequence that satisfies $\rho_k \leq \frac{B^{\prime}}{Re_k + \sum_{i = 1}^{k-1}Re_i}$ from front to back. The first $k$ workers in the sequence form the set $S^{\prime}$. The learned payment density threshold is $\rho^{\ast} = \min(\frac{B^{\prime}}{\sum_{i \in S^{\prime} }Re_i}, \frac{b_{k+1}}{Re_{k+1}})$.
\begin{algorithm}[tb]
    \caption{SelectWorkersFromGroup}
    \label{alg:online_auction_for_each_group}
    \textbf{Input}: Worker set $U$; Payment density threshold $\rho^{\ast}$; Winner Worker set $S$; 
    \begin{algorithmic}[1]
        \STATE Sort all workers in $U$ such that $Re_1 \leq ... \leq Re_{|U|}$;
        \label{alg:online_auction_for_each_group:sort_set}
        \STATE $i = 1$;
        \WHILE {$ i \leq |U|$}
        \label{alg:online_auction_for_each_group:choose_from_set:start}
            \STATE $\rho_i = \frac{b_i}{Re_i}$;
            \IF {$\rho_i \leq \rho^{\ast}$}
                \IF {$i \notin S$ and $(T-t+1)\cdot Re_i\cdot \rho^{\ast} \leq \frac{B}{2} - \sum_{j \in U}p_j$}
                \label{alg:online_auction_for_each_group:choose_unchosen_worker:start}
                    \STATE $S = S \cup \{i\}$;
                    $p_i = (T-t+1)Re_i \rho^{\ast}$;
                    $\rho^{\ast}_{i} = \rho^{\ast}$;
                \label{alg:online_auction_for_each_group:choose_unchosen_worker:end}
                \ELSIF {$i \in S$}
                \label{alg:online_auction_for_each_group:choose_chosen_worker:start}
                    \STATE $p_i^{\prime} = p_i + (\rho^{\ast} - \rho_{i}^{\ast}) \cdot Re_i \cdot (T - t + 1)$;
                    \label{alg:online_auction_for_each_group:update_payment:start}
                    \IF {$p_i^{\prime} > p_i$}
                        \IF {$p_i^{\prime} > \frac{B}{2} - \sum_{j \in U}p_j + p_i$}
                            \STATE $p_i^{\prime} = \frac{B}{2} - \sum_{j \in U}p_j + p_i$;
                        \ENDIF
                        \STATE $p_i = p_i^{\prime}$;
                        $\rho_i^{\ast} = \rho^{\ast}$;
                    \ENDIF
                    \label{alg:online_auction_for_each_group:update_payment:end}
                \ENDIF
                \label{alg:online_auction_for_each_group:choose_chosen_worker:end}
            \ENDIF 
            \STATE $i = i + 1$;
        \ENDWHILE
        \label{alg:online_auction_for_each_group:choose_from_set:end}
    \end{algorithmic}
\end{algorithm}
\begin{algorithm}[tb]
    \caption{Online Selection And Payment Mechanism}
    \label{alg:online_auction}
    \textbf{Input}: Budget $B$; Total rounds of global iterations $T$; First-round budget ratio $ratio$;
    \begin{algorithmic}[1]
        \STATE \textbf{/*Before the task.*/}
        \STATE $B_1 = B \cdot ratio$; $t = 1$; $U_1 = \phi$; $U_2 = \phi$;
        \WHILE {The scheduled start time has not been reached}
        \label{alg:online_auction:join_in:start1}
            \STATE Worker $i$ arrives; $p_i = 0$; 
            \STATE Add worker $i$ to $U_1$ or $U_2$ according to his $gid_i$.
        \ENDWHILE
        \label{alg:online_auction:join_in:end1}
        \STATE \textbf{/*Select workers at the first time step through an offline reverse auction.*/}
        \STATE $U = U_1 \cup U_2$; $S = \phi$; $i = 1$;
        \STATE Sort the workers in $U$ such that $\frac{b_1}{Re_1} \leq ... \leq \frac{b_{|U|}}{Re_{|U|}}$;
        \label{alg:online_auction:first_sort}
        \WHILE {$\frac{T\cdot b_i}{Re_i} \leq \frac{B_1}{Re_i + \sum_{j \in S}Re_j}$}
        \label{alg:online_auction:first_choose:start}
            \STATE $S = S \cup \{i\}$; $i = i + 1$;
        \ENDWHILE
        \label{alg:online_auction:first_choose:end}
        \STATE $k = i - 1$; $\rho^{\ast} = \frac{\min(\frac{B_1}{\sum_{j \in S}Re_j}, \frac{T \cdot  b_{k+1}}{Re_{k+1}})}{T}$;
        \label{alg:online_auction:first_rho}
        \FOR {each worker $i \in S$}
        \label{alg:online_auction:first_payment:start}
            \STATE $p_i = T \cdot Re_i \cdot \rho^{\ast}$; $\rho^{\ast}_i = \rho^{\ast}$;
        \ENDFOR
        \label{alg:online_auction:first_payment:end}
        \STATE \textbf{/*Select workers at other time steps through an online reverse auction.*/}
        \STATE $t = 2$;
        \WHILE {$t \leq T$}
            \STATE Add each newly worker $i$ arrived in $(t-1, t]$ to $U_1$ or $U_2$ according to his $gid_i$, and set $p_i = 0$;
            \label{alg:online_auction:join_in:end2}
            \STATE $B^{\prime} = (B_1 + \frac{(B-B_1)(t-1)}{T-1})/T$;
            \STATE $\rho^{\ast}_{U_1} = GetPaymentDensityThreshold(\frac{B^{\prime}}{2}, U_1)$;
            \label{alg:online_auction:get_threshold1}
            \STATE $\rho^{\ast}_{U_2} = GetPaymentDensityThreshold(\frac{B^{\prime}}{2}, U_2)$;
            \label{alg:online_auction:get_threshold2}
            \STATE $SelectWorkersFromGroup(U_1, \rho_{U_2}^{\ast}, S)$;
            \label{alg:online_auction:choose_from_set1}
            \STATE $SelectWorkersFromGroup(U_2, \rho_{U_1}^{\ast}, S)$;
            \label{alg:online_auction:choose_from_set2}
            \STATE $t = t + 1$;
        \ENDWHILE
    \end{algorithmic}
\end{algorithm}

We design the online selection mechanism $\mathbf{f}$ and payment mechanism $\mathbf{p}$ using a modified multi-stage sampling-accepting process, as shown in Algorithm \ref{alg:online_auction}. Federated learning consists of multiple rounds ($T$ rounds) of global iterations, which correspond to multiple stages (multiple time steps). Since only a sufficient number of workers are selected can the FL task start, we design different worker selection methods for the first time step and other time steps, respectively. The first-round budget ratio is $ratio$, and the first-round budget is $B_1 = B \cdot ratio$. The task publisher sets up two arrived worker groups $U_1$ and $U_2$, which is used as the sample set for each other to calculate the payment density threshold. A worker's bid does not affect the payment density threshold used to decide on him, thus ensuring cost truthfulness. The result of $id_i \% 2$ is as $gid_i$ of worker $i$, which determines which group to join. Workers cannot join other groups by modifying their bids. Because of the randomness of the hash value, the distribution of bids in the two groups is similar, thus ensuring the validity of the payment density threshold. 

At the first time step $t = 1$, once the scheduled start time is reached, the task publisher selects workers through the offline proportional share mechanism. First arranges the workers in the arrived workers set $U$ in the order of increasing cost density,that is
\begin{equation}
    \rho_1 \leq ... \leq \rho_k \leq \rho_{k+1} \leq ... \rho_{|U|}.
\end{equation}
Then, find the last worker $k$ in the worker sequence that satisfies $T\rho_k \leq \frac{B_1}{Re_k + \sum_{i = 1}^{k}Re_i}$. If the number of workers selected is lower than the minimum number to start the first global iteration, delay the start time and repeat the above process. Otherwise, the first $k$ workers in the sequence form the winning worker set $S$ to participate in all global iterations. According to the worker selection result of the first time step $t = 1$, the payment density threshold $\rho^{\ast}$ is 
\begin{equation}
    \rho^{\ast} = \min(\frac{B_1}{T \sum_{j \in S}Re_j}, \frac{b_{k+1}}{Re_{k+1}}).
\end{equation}
Unselected workers are paid 0 and wait for the next worker selection process at the next time step. The reward for the selected worker $i \in S$ is temporarily $p_i = T \cdot Re_i \cdot \rho^{\ast}$. The maximum payment density threshold he encounters after being selected is $\rho_i^{\ast} = \rho^{\ast}$. At other time steps $t > 1$, all newly arrived workers join the corresponding group, and their rewards are initialized to 0. First, update the sample budget $B^{\prime} = (B_1 + \frac{(B- B_1)(t - 1)}{T-1})/T$. Then, with $U_1$ and $U_2$ as the sample set respectively, and $\frac{B^{\prime}}{2}$ as the sample budget, compute the payment density threshold $\rho_{U_1}^{\ast}$ and $\rho_{U_2}^{\ast}$ through Algorithm~\ref{alg:get_density_threshold}. $\rho_{U_2}^{\ast}$ and $\rho_{U_1}^{\ast}$ are used to make decisions on workers in $U_1$ and $U_2$, respectively. Both $U_1$ and $U_2$ have budgets of $\frac{B}{2}$. For $U_1$, the task publisher prioritizes decisions on workers with higher reputation. There are both unselected and selected workers in $U_1$. If the unselected worker $i$ satisfies $\rho_i \leq \rho_{U_2}^{\ast}$ and $U_1$ has enough remaining budget, then he is selected and added to $S$. He will participate in the remaining $(T - t + 1)$ rounds, $p_i = (T - t + 1) \cdot Re_i \cdot \rho_{U_2}^{\ast} $ and $\rho_i^{\ast} = \rho_{U_2}^{\ast}$. If the selected worker $i$ satisties $\rho_i \leq \rho_{U_2}^{\ast}$ and $\rho_i^{\ast} < \rho_{U_2}^{\ast}$, and $U_1$ has remaining budget, then $p_i = \min(p_i + (\rho_{U_2}^{\ast} - \rho_i^{\ast})\cdot Re_i \cdot (T - t + 1), \frac{B}{2} - \sum_{j \in U_1}p_j + p_i)$ and $\rho_i^{\ast} = \rho_{U_2}^{\ast}$. Do the same for $U_2$ as $U_1$, using $\rho_{U_1}^{\ast}$ instead of $\rho_{U_2}^{\ast}$. Workers in $S$ start the next global iteration.

\section{Theoretical Analysis}
\label{sec:analysis}
We will prove that our mechanism satisfies individual rationality, budget feasibility, computational efficiency, consumer sovereignty, cost truthfulness and time truthfulness with a sufficient budget.
\begin{theorem}
    The mechanism satisfies individual rationality.
\end{theorem}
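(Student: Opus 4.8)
The plan is to show that every worker has non-negative utility, $u_i \ge 0$. For $i \notin S$ this is immediate since $u_i = 0$, so fix a winner $i \in S$ and let $t_i$ be the step at which $i$ first enters $S$. It suffices to prove the sharper bound $p_i \ge (T - t_i + 1)\, b_i$, since then $u_i = p_i - c_i(T - t_i + 1) \ge (b_i - c_i)(T - t_i + 1) \ge 0$ because a rational worker never bids below his true cost (and in particular $b_i = c_i$ under truthful bidding). I would establish this bound at the moment $i$ is selected, splitting on whether $t_i = 1$ or $t_i > 1$, and then check that no later step of Algorithm~\ref{alg:online_auction_for_each_group} ever decreases $p_i$.

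\textbf{First step ($t_i = 1$).} Let $k = |S|$, so the winners are the prefix $1,\dots,k$ of the cost-density ordering $\rho_1 \le \dots \le \rho_k \le \rho_{k+1} \le \dots$, and the threshold is $\rho^{\ast} = \min\!\big(\tfrac{B_1}{T\sum_{j\in S}Re_j},\, \tfrac{b_{k+1}}{Re_{k+1}}\big)$. The crux is that $\rho^{\ast} \ge \rho_k$: worker $k$ passed the admission test of Algorithm~\ref{alg:online_auction}, which at that iteration (when $S = \{1,\dots,k-1\}$) reads $T\rho_k \le B_1 / \big(Re_k + \sum_{j=1}^{k-1}Re_j\big) = B_1 / \sum_{j\in S}Re_j$, so $\rho_k \le \tfrac{B_1}{T\sum_{j\in S}Re_j}$; and $\rho_k \le \rho_{k+1} = \tfrac{b_{k+1}}{Re_{k+1}}$ by the sorting. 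Hence $\rho_i \le \rho_k \le \rho^{\ast}$, and with $t_i = 1$, $p_i = T\, Re_i\, \rho^{\ast} \ge T\, Re_i\, \rho_i = T\, b_i = (T - t_i + 1)\, b_i$.

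\textbf{Later steps ($t_i = t > 1$).} Worker $i$ enters $S$ inside Algorithm~\ref{alg:online_auction_for_each_group}, say while processing $U_1$ with threshold $\rho^{\ast} = \rho^{\ast}_{U_2}$ (the $U_2$ case is symmetric). That branch is taken only when $\rho_i \le \rho^{\ast}$, and it sets $p_i = (T - t + 1)\, Re_i\, \rho^{\ast} \ge (T - t + 1)\, Re_i\, \rho_i = (T - t_i + 1)\, b_i$. It remains to see that subsequent visits never undo this. In Algorithm~\ref{alg:online_auction_for_each_group}, an existing winner's payment changes only inside the branch guarded by $p_i' > p_i$, where $p_i$ is replaced by $\min\!\big(p_i', \tfrac{B}{2} - \sum_{j\in U}p_j + p_i\big)$; provided the running payment total of each group stays at most $\tfrac{B}{2}$, the cap $\tfrac{B}{2} - \sum_{j\in U}p_j + p_i$ is at least $p_i$, so this is a non-decreasing update. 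Consequently the final payment of every $i \in S$ is at least $(T - t_i + 1)\, b_i \ge (T - t_i + 1)\, c_i$, i.e.\ $u_i \ge 0$.

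I expect the first-step case to be the main obstacle: getting $\rho_i \le \rho^{\ast}$ requires reading the greedy admission condition off the \emph{last} accepted worker $k$ (attaching it to worker $i$ yields an inequality in the wrong direction) together with monotonicity of partial reputation sums. The only other delicate point is the non-decrease of $p_i$ under the payment-cap line of Algorithm~\ref{alg:online_auction_for_each_group}, which rests on the invariant $\sum_{j\in U_1}p_j \le B/2$ (and likewise for $U_2$); this is the per-group budget-feasibility bookkeeping and is cleanest to state once as a loop invariant of Algorithm~\ref{alg:online_auction} and invoke here.
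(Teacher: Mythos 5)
Your proposal is correct and follows essentially the same route as the paper's proof: split on whether the worker is selected at $t=1$ or $t>1$, show the temporary payment is at least $(T-t_i+1)b_i$ via $\rho_i \le \rho^{\ast}$, and note the payment never decreases afterwards. You simply spell out two steps the paper asserts without detail (deriving $\rho_i \le \rho_k \le \rho^{\ast}$ from the last accepted worker's admission test, and the non-decrease of $p_i$ under the per-group budget cap), which is a faithful elaboration rather than a different argument.
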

\begin{proof}
If worker $i$ is selected at $t=1$, indicating $b_i \leq \min(\frac{B}{\sum_{j \in S} Re_j}, \frac{T \cdot b_{k+1 }}{Re_{k+1}}) \cdot Re_i / T$, his temporary reward is $p_i = \min(\frac{B}{\sum_{j \in S} Re_j}, \frac{T \cdot b_{k+1}}{Re_{k+1}}) \cdot Re_i \geq T\cdot b_i$. If worker $i$ is selected at $t > 1$, indicating $\frac{b_i}{Re_i} \leq \rho^{\ast}$, his temporary reward is $p_i = \rho^{\ast} \cdot Re_i \cdot (T - t + 1) \geq \frac{b_i}{Re_i} \cdot Re_i \cdot (T - t + 1) = b_i(T - t + 1)$. The final reward of the selected worker is not lower than the temporary reward, indicating $u_i \geq 0$. The utility of unselected workers is $u_i = 0$.
\end{proof}
    
\begin{theorem}
    The mechanism satisfies budget feasibility.
\end{theorem}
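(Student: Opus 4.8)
The plan is to bound the final total payment $\sum_{i\in S}p_i$ by accounting separately for the amount committed at the first time step and the amount committed at the remaining $T-1$ time steps, leaning on the explicit budget guards that are hard‑wired into Algorithms~\ref{alg:online_auction_for_each_group} and~\ref{alg:online_auction}. First I would dispose of $t=1$: after the offline proportional‑share selection each winner $i$ receives $p_i = T\cdot Re_i\cdot\rho^{\ast}$ with $\rho^{\ast}=\min\bigl(\tfrac{B_1}{T\sum_{j\in S}Re_j},\tfrac{b_{k+1}}{Re_{k+1}}\bigr)\le \tfrac{B_1}{T\sum_{j\in S}Re_j}$, so $\sum_{i\in S}p_i = T\rho^{\ast}\sum_{i\in S}Re_i\le B_1 = B\cdot ratio\le B$. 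In particular, if the task never proceeds past $t=1$ (or is delayed and restarted at $t=1$) the claim already holds.

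For $t>1$ I would prove, by induction over the sequence of calls to \textsc{SelectWorkersFromGroup}, the invariant that after each such call on a group $U_g$ ($g\in\{1,2\}$) one has $\sum_{j\in U_g}p_j\le B/2$. The inductive step is short, because the only two operations that ever change a payment are: (i) admitting a fresh worker $i$ with $p_i=(T-t+1)Re_i\rho^{\ast}$, which fires only when $(T-t+1)Re_i\rho^{\ast}\le \tfrac{B}{2}-\sum_{j\in U_g}p_j$, so the running sum stays $\le B/2$; and (ii) raising an incumbent's reward to $p_i'$, which is explicitly clipped to $\tfrac{B}{2}-\sum_{j\in U_g}p_j+p_i$, so the running sum is again pinned at $\le B/2$. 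Since $U_1$ and $U_2$ are disjoint and their payment pools never interact, the terminal bound is $\sum_{i\in S}p_i = \sum_{j\in U_1}p_j+\sum_{j\in U_2}p_j\le \tfrac{B}{2}+\tfrac{B}{2}=B$. A minor bookkeeping point to verify along the way is that a $t=1$ winner is thereafter only ever touched through the incumbent branch, never re-counted as a new admission; this follows from the $i\in S$ test.

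The main obstacle is the base case of that induction, namely reconciling the first‑round spend $B_1$ with the per‑group cap $B/2$: the guards in (i)/(ii) only \emph{preserve} the invariant $\sum_{j\in U_g}p_j\le B/2$, they do not \emph{restore} it, so I need each group's $t=1$ pool to already be at most $B/2$ before the first call at $t=2$. The clean way to close this is to use $ratio\le \tfrac12$, whence $B_1\le B/2$ and each group's share of the first‑round spend is trivially $\le B/2$; without that assumption the two‑group decomposition can in principle overshoot $B$, so I would flag $ratio\le\tfrac12$ as the hypothesis under which budget feasibility is guaranteed (and note that it is the natural regime, since $B_1$ is meant to seed only the opening round). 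Everything else is routine once the running‑sum invariant is in place.
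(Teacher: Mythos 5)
Your proposal follows essentially the same route as the paper's proof: bound the first-round spend by the proportional-share rule ($\sum_{i\in S}p_i\le B_1\le B$ at $t=1$), and for $t>1$ invoke the guards in Algorithm~\ref{alg:online_auction_for_each_group} that cap each group's cumulative payment pool at $B/2$, so the final total is at most $B$. The paper's proof is exactly this two-step argument and simply asserts the per-group cap. What you add, and it is a genuine refinement rather than a detour, is the explicit running-sum invariant together with its base case: the admission guard and the clipping step only \emph{preserve} $\sum_{j\in U_g}p_j\le B/2$, and since the $t=1$ payments of workers in $U_g$ are included in that sum, the invariant at the start of $t=2$ requires $B_1\le B/2$, i.e. $ratio\le \tfrac12$. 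The paper never states this hypothesis and its proof silently relies on it (the experiments use $ratio=0.35$, so the implicit assumption holds there). Your worry is not vacuous: with, say, $ratio=0.8$ and all first-round winners landing by chance in $U_1$, that group's pool can sit at $0.8B$ while $U_2$ is still permitted to spend up to $B/2$, so the total can exceed $B$. In short, your argument is correct under the extra (and natural) assumption $ratio\le\tfrac12$, takes the same decomposition as the paper, and makes explicit a condition the paper's own proof glosses over.
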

\begin{proof}
At $t=1$, the total temporary reward is $\sum_{i \in S}p_i = \sum_{i \in S}Re_i \min(\frac{B_1}{\sum_{j \ in S}Re_j}, \frac{T\cdot b_{k+1}}{Re_{k+1}}) \leq \sum_{i \in S}\frac{B_1 Re_i}{\sum_{ j \in S}Re_j} = B_1$. At $t > 1$, lines \ref{alg:online_auction:choose_from_set1} and \ref{alg:online_auction:choose_from_set2} of Algorithm~\ref{alg:online_auction} ensure that the total reward of workers selected from $U_1$ and $U_2$ does not exceed $\frac{B}{2}$ respectively, which means the total reward does not exceed $B$.
\end{proof}

\begin{theorem}
    The mechanism satisfies computational efficiency.
\end{theorem}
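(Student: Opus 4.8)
The plan is to bound the running time of each of the three algorithms separately, in terms of the number of workers $n$ that ever arrive and the number of global iterations $T$, and then add the bounds; since $n$ and $T$ are both polynomial in the input size, a bound polynomial in $n$ and $T$ suffices.

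First I would analyze Algorithm~\ref{alg:get_density_threshold}: line~\ref{alg:get_density_threshold:sort} sorts $U'$ in $O(|U'|\log|U'|)$ time, and the \texttt{while} loop of lines~\ref{alg:get_density_threshold:choose_start}--\ref{alg:get_density_threshold:choose_end} runs at most $|U'|$ iterations, each costing $O(1)$ if the running total $\sum_{j\in S'}Re_j$ is maintained incrementally; hence this routine is $O(n\log n)$. The same reasoning applies to Algorithm~\ref{alg:online_auction_for_each_group}: line~\ref{alg:online_auction_for_each_group:sort_set} sorts by reputation in $O(|U|\log|U|)$, and the loop of lines~\ref{alg:online_auction_for_each_group:choose_from_set:start}--\ref{alg:online_auction_for_each_group:choose_from_set:end} performs $|U|$ iterations of $O(1)$ arithmetic each (the payment updates and budget tests only touch quantities that can be kept as running sums), so it too is $O(n\log n)$.

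Next I would assemble Algorithm~\ref{alg:online_auction}. The pre-task loop together with the first-time-step block costs $O(n\log n)$: one sort of $U$, one greedy pass, and one pass to set the temporary payments. The main loop executes at most $T-1$ times; each pass inserts the newly arrived workers in $O(n)$ total, then calls \texttt{GetPaymentDensityThreshold} and \texttt{SelectWorkersFromGroup} twice each, which by the previous paragraph is $O(n\log n)$. Hence one pass is $O(n\log n)$ and the loop is $O(Tn\log n)$. Summing the contributions, the mechanism runs in $O(Tn\log n)$ time, polynomial in the input size, which is precisely computational efficiency.

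I do not expect a real obstacle here; the only subtlety worth a sentence is that the cumulative quantities $\sum_{j\in S}Re_j$ and $\sum_{j\in U}p_j$ appearing in the loop conditions should be updated incrementally so that each iteration is $O(1)$ rather than $O(n)$ --- and even if they are recomputed naively, the bound merely weakens to $O(Tn^2)$, which is still polynomial.
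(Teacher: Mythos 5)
Your proposal is correct and follows essentially the same route as the paper: bound Algorithms~\ref{alg:get_density_threshold} and~\ref{alg:online_auction_for_each_group} by their sort plus a linear pass, then sum over the $T$ time steps of Algorithm~\ref{alg:online_auction}; your $O(Tn\log n)$ bound coincides with the paper's $O(\sum_{t=1}^{T}(\sum_{j=1}^{t}n_j)\log_2(\sum_{j=1}^{t}n_j))$ since $\sum_{j=1}^{t}n_j\le n$. Your remark about maintaining the running sums incrementally is a small but sensible refinement that the paper leaves implicit.
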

\begin{proof}
For Algorithm~\ref{alg:get_density_threshold}, the time complexity of sorting the workers in $U^{\prime}$ (line \ref{alg:get_density_threshold:sort}) is $O(|U^{\prime}|\log_2 |U^{\prime}|)$ and that of selecting workers from $U^{\prime}$ (lines \ref{alg:get_density_threshold:choose_start}-\ref{alg:get_density_threshold:choose_end}) is $O(|U^{\prime}|)$. Therefor, the time complexity of Algorithm \ref{alg:get_density_threshold} is $O( |U^{\prime}|\log_2 |U^{\prime}|)$. For Algorithm~\ref{alg:online_auction_for_each_group}, the time complexity of sorting the workers (line \ref{alg:online_auction_for_each_group:sort_set}) is $O(|U|\log_2 |U|)$ and that of selecting or updating the payment (lines \ref{alg:online_auction_for_each_group:choose_from_set:start}-\ref{alg:online_auction_for_each_group:choose_from_set:end}) is $O(|U|)$. Therefore the time complexity of Algorithm~\ref{alg:online_auction_for_each_group} is $O(|U|\log_2 |U|)$. Suppose the number of workers bidding during $(t-1, t]$ is $n_t$. At $t=1$, the time complexity of sorting the workers (line \ref{alg:online_auction:first_sort}) is $O(n_1\log_2 n_1)$, that of selecting workers (lines \ref{alg:online_auction:first_choose:start}-\ref{alg:online_auction:first_choose:end}) is $O(| S_1|)$ and that of calculating the temporary reward (lines \ref{alg:online_auction:first_payment:start}-\ref{alg:online_auction:first_payment:end}) is $O(n_1)$. At $t > 1$, the time complexity of computing the payment density thresholds for $U_1$ and $U_2$ (lines \ref{alg:online_auction:get_threshold1} and \ref{alg:online_auction:get_threshold2}) is $O( (\sum_{j=1}^t n_j)\log_2 (\sum_{j=1}^t n_j))$ and that of making a decision on $U_1$ and $U_2$ is $O((\sum_{j=1}^t n_j)\log_2 (\sum_{j=1}^t n_j))$. The time complexity of Algorithm~\ref{alg:online_auction} is $O(\sum_{t=1}^T ((\sum_{j=1}^t n_j)\log_2 (\sum_{j=1}^t n_j)))$.
\end{proof}

\begin{theorem}
    The mechanism satisfies consumer sovereignty.
\end{theorem}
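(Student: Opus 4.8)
The plan is to argue that, unlike the two-stage sampling-accepting scheme whose first-stage bidders can never win, our mechanism keeps every worker a live candidate from his arrival until round $T$, and that for each worker there is a nonempty interval of bids that makes him a winner. I would organize the proof by the time step $\hat{a_i}$ at which worker $i$ declares his arrival.

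\textbf{Arrival at $\hat{a_i}=1$.} Worker $i$ then takes part in the offline proportional-share selection of lines~\ref{alg:online_auction:first_sort}--\ref{alg:online_auction:first_choose:end}. I claim it suffices that $b_i$ be small enough to make $\rho_i=\frac{b_i}{Re_i}$ the strictly smallest cost density \emph{and} to satisfy $b_i\le \frac{B_1 Re_i}{T\sum_{j\in U}Re_j}$. The first requirement puts worker $i$ in sorted position $1$, so the greedy loop tests him first, with $S=\phi$; the second requirement gives $\frac{T b_i}{Re_i}\le \frac{B_1}{\sum_{j\in U}Re_j}\le \frac{B_1}{Re_i}$ because $Re_i\le\sum_{j\in U}Re_j$, so the loop condition at line~\ref{alg:online_auction:first_choose:start} holds and $i$ is accepted into $S$ (and stays there regardless of how the loop proceeds). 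If fewer than the required number of workers are accepted, the start time is delayed and the same argument applies verbatim at the new start time.

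\textbf{Arrival at $\hat{a_i}=t>1$.} Worker $i$ is assigned to a group by $gid_i$; say $i\in U_1$ (the case $i\in U_2$ is symmetric). From step $t$ on, \textsc{SelectWorkersFromGroup} examines $i$ every round against the threshold $\rho^{\ast}_{U_2}$, which Algorithm~\ref{alg:get_density_threshold} produces from $U_2$ alone and which therefore does not depend on $b_i$. I would first observe $\rho^{\ast}_{U_2}>0$: the sample budget $\frac{B'}{2}$ with $B'=(B_1+\tfrac{(B-B_1)(t-1)}{T-1})/T$ is strictly positive and all bids are positive, so the minimum in Algorithm~\ref{alg:get_density_threshold} is taken over strictly positive quantities. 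Consequently, whenever $b_i\le\rho^{\ast}_{U_2}Re_i$ and the budget test $(T-t+1)Re_i\,\rho^{\ast}_{U_2}\le\frac{B}{2}-\sum_{j\in U_1}p_j$ of line~\ref{alg:online_auction_for_each_group:choose_unchosen_worker:start} holds, worker $i$ is added to $S$ with $p_i=(T-t+1)Re_i\rho^{\ast}_{U_2}>0$. Since this budget test involves neither $b_i$ nor anything under worker $i$'s control, the set of winning bids is the nonempty interval $(0,\rho^{\ast}_{U_2}Re_i]$ as soon as $U_1$ has room for $i$'s payment.

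Putting the cases together, no matter when worker $i$ arrives the mechanism never discards his bid permanently, and a sufficiently small bid always brings $\rho_i$ below the relevant threshold, so he genuinely has a chance to be selected and paid; this is exactly consumer sovereignty. The step I expect to be the main obstacle is the budget side-condition in the $t>1$ case: a small bid alone need not suffice if the half-budget $\frac{B}{2}$ of worker $i$'s group has already been committed to earlier winners. I would handle this by stressing that the property asks only that workers \emph{have chances}, and that the side-condition concerns the remaining group budget, not $b_i$; hence it is met for any worker whose group is not yet budget-saturated on arrival — precisely the regime consumer sovereignty is meant to guarantee. The degenerate case in which the complementary group is momentarily empty (so Algorithm~\ref{alg:get_density_threshold} has no input) I would dismiss by noting that worker $i$ stays eligible at the next step, where both groups are nonempty and the argument above applies.
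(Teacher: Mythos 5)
Your proposal is correct and follows essentially the same line as the paper's own (much terser) proof: at every time step a worker is judged only against a threshold and a remaining-budget test that he cannot be permanently barred from, so a sufficiently low bid together with available budget always gives him a chance to be selected and paid. Your version simply makes explicit what the paper leaves implicit — the exact winning-bid conditions at $t=1$ and $t>1$, the positivity and bid-independence of the cross-group threshold, and the degenerate empty-group case — which strengthens rather than changes the argument.
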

\begin{proof}
At each time step, as long as the worker's bid is low enough and the remaining budget is sufficient, that worker is selected and paid. The task publisher does not automatically reject either worker. Therefore, the mechanism satisfies consumer sovereignty.
\end{proof}

\begin{theorem}
    The mechanism satisfies cost truthfulness with a sufficient budget.
\end{theorem}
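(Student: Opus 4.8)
The plan is to use the standard single-parameter characterization: the allocation must be monotone in the reported cost, and a selected worker's payment must equal a critical value independent of his own bid. I fix worker $i$, fix everything reported by everyone else, fix $i$'s reported arrival time $\hat a_i$, and study how $i$'s utility changes as a function of $b_i$.

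The first step is to identify, for each time step $t\ge\hat a_i$, the payment density threshold $\rho^{(t)}$ that the mechanism applies to worker $i$: for $t>1$ this is $\rho^*_{U_2}$ when $i\in U_1$ (and symmetrically), computed by Algorithm~\ref{alg:get_density_threshold} from the other group only; for $t=1$ it is the pooled proportional-share threshold $\min\bigl(\tfrac{B_1}{T\sum_{j\in S}Re_j},\tfrac{b_{k+1}}{Re_{k+1}}\bigr)$. The key claim is that every $\rho^{(t)}$ is independent of $b_i$. For $t>1$ this is immediate from the two-group design. For $t=1$ this is exactly where the ``sufficient budget'' hypothesis is used: with enough budget either all arrived workers are admitted (and the threshold is $\tfrac{B_1}{T\sum Re_j}$, a ratio of public quantities) or the marginal worker $k{+}1$ and his density are determined without reference to where $b_i$ sits in the sorted order. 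I would also note that under sufficient budget none of the $\tfrac{B}{2}-\sum_j p_j$ caps in Algorithm~\ref{alg:online_auction_for_each_group} ever bind, so a worker selected at step $\tau$ receives the full threshold payment at selection and full top-ups thereafter.

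Next I make monotonicity and the payment formula explicit. Worker $i$ is selected at the first step $\tau(b_i)=\min\{t\ge\hat a_i:\rho^{(t)}\ge b_i/Re_i\}$ (never, if the set is empty); since the $\rho^{(t)}$ do not depend on $b_i$, $\tau(\cdot)$ is non-decreasing. Using the accumulation rule of Algorithm~\ref{alg:online_auction_for_each_group} and an Abel-summation telescoping, the final reward of a worker selected at step $\tau$ collapses to $p_i(\tau)=Re_i\sum_{t=\tau}^{T}R_t$ with $R_t=\max(\rho^{(\tau)},\dots,\rho^{(t)})$ the running maximum of the applied thresholds, so $i$'s utility is $V(\tau)=\sum_{t=\tau}^{T}\bigl(Re_iR_t-c_i\bigr)$. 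Truthful reporting gives $\tau^\star=\tau(c_i)$, characterized by $\rho^{(\tau^\star)}\ge c_i/Re_i$ and $\rho^{(t)}<c_i/Re_i$ for $\hat a_i\le t<\tau^\star$; in particular every term of $V(\tau^\star)$ is $\ge0$. Two short comparisons then close the argument. Overbidding forces $\tau\ge\tau^\star$ and, on the common range, $R_t$ no larger, so $V(\tau)\le V(\tau^\star)$ term by term and the lost (nonnegative) terms only hurt; being unselected gives $0\le V(\tau^\star)$. Underbidding forces $\tau\le\tau^\star$, but the extra steps $t\in[\tau,\tau^\star)$ all have $R_t<c_i/Re_i$, i.e.\ strictly negative terms, while for $t\ge\tau^\star$ the running maximum is unchanged (the newly included thresholds are all dominated by $\rho^{(\tau^\star)}$); hence $V(\tau)\le V(\tau^\star)$ again. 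So $b_i=c_i$ is optimal.

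The step I expect to be the main obstacle is making the $t=1$ case airtight: unlike later steps, the first round pools $U_1$ and $U_2$ and runs the offline proportional share on the union, so the two-group trick does not by itself neutralize $b_i$, and one has to lean carefully on ``sufficient budget'' both to argue the applicable first-step threshold is $b_i$-independent and to keep the selection decision at step one monotone. A secondary technicality is verifying the telescoping identity for $p_i(\tau)$ and confirming the budget caps in Algorithm~\ref{alg:online_auction_for_each_group} stay slack throughout --- only then is the payment a clean function of $\tau$ with no indirect channel by which $i$ could inflate his own later top-ups. Ties in the sort order, needed so that ``selected iff $\rho_i\le\rho^{(t)}$'' holds exactly, are routine to handle.
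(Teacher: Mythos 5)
Your proposal is correct and follows essentially the same route as the paper's proof: it rests on the fact that a selected worker's per-round payment equals a threshold density that is independent of his own bid (the two-group construction for $t>1$, the proportional-share critical payment at $t=1$), and then compares utilities across the selection times induced by reporting $c_i$ versus a misreport --- underbidding only adds rounds whose per-round payment is below $c_i$, overbidding only forfeits rounds whose per-round payment is at least $c_i$, which is exactly the paper's Case~2 and Case~4 reasoning. Your explicit running-maximum/telescoping payment formula and your attention to the first-step threshold and the slackness of the $\tfrac{B}{2}$ budget caps (where the ``sufficient budget'' hypothesis is invoked) simply make rigorous what the paper's proof leaves implicit.
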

\begin{proof}
Consider the scenario where the worker wins with $c_i$. \\
\textbf{Case 1}: If worker $i$ with $c_i$ is selected at $t \geq \hat{a_i}$ and  he with $b_i$ is still selected at $t$, then $u(b_i, b_{-i}) = u(c_i, b_{-i})$. \\
\textbf{Case 2}: Suppose worker $i$ with $c_i$ is selected at $t > \hat{a_i}$. If he with $b_i < c_i$ is selected at $t^{\prime} < t$, then he will get extra reward $Re_i \cdot \rho^{\ast}$ at each time step during $[t^{\prime}, t)$ ($\rho^{\ast}$ may change at different time steps). Worker $i$ with $c_i$ is not selected in $[t^{\prime}, t)$, indicating that $\frac{c_i}{Re_i} > \rho^{\ast}$, that is, $c_i > Re_i \cdot \rho^{\ast}$. This means that the utility of worker $i$ with $b_i$ is negative during $[t^{\prime}, t)$. Thus $u(b_i, b_{-i}) < u(c_i, b_{-i })$. \\
\textbf{Case 3}: If worker $i$ with whether $b_i$ does not win, then $u(c_i, b_{-i}) \geq u(b_i, b_{-i}) = 0$. \\
\textbf{Case 4}: Suppose worker $i$ with $c_i$ is selected at $t$ and he with $b_i > c_i$ is selected at $t^{\prime} > t$. Worker $i$ with $c_i$ will get extra reward $Re_i \cdot \rho^{\ast}$ at each time step during $[t, t^{\prime})$ ($\rho^{\ast}$ may change at different time steps). Since $\frac{c_i}{Re_i} \leq \rho^{\ast}$, that is, $c_i \leq Re_i \cdot \rho^{\ast}$. This means that the utility of worker $i$ with $c_i$ is not negative during $[t, t^{\prime})$. Thus $u(b_i, b_{-i}) \leq u(c_i, b_{-i})$. 

Then, consider the scenario where the worker with $c_i$ does not win at each time step. \\
\textbf{Case 1}: If worker $i$ with $b_i$ does not win, then $u_i(b_i, b_{-i}) = u_i(c_i, b_{-i}) = 0$. \\
\textbf{Case 2}: If worker $i$ with $b_i < c_i$ is selected at $t$, then he will get reward $Re_i \cdot \rho^{\ast}$ at each time step during $[t, T]$ ($\rho^{\ast}$ may change at different time steps). Worker $i$ with $c_i$ is not selected, indicating that $\frac{c_i}{Re_i} > \rho^{\ast}$, that is, $c_i > Re_i \cdot \rho^{\ast}$. This means that $u(b_i, b_{-i}) < u(c_i, b_{-i }) = 0$. 

In summary, the mechanism satisfies cost truthfulness.
\end{proof}

\begin{theorem}
    The mechanism satisfies time truthfulness with a sufficient budget.
\end{theorem}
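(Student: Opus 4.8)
The plan is to adapt the case analysis of the cost-truthfulness proof, using the standing assumption $\hat{a_i} \ge a_i$: a worker can only \emph{delay} his reported arrival, so I only need to rule out a report $\hat{a_i}' > a_i$. Fix all other workers' reports and, since cost truthfulness is already proved, take $b_i = c_i$. Write $t_A$ (resp.\ $t_B$) for the step at which $i$ first becomes a winner when he reports $a_i$ (resp.\ $\hat{a_i}'$), with the value $\infty$ if this never happens, and $p_i^A, u_i^A$ (resp.\ $p_i^B, u_i^B$) for the resulting payment and utility, everything else held fixed.

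The key preliminary is an \emph{invariance} observation together with a closed form for the payment. Assume $i\in U_1$ (the case $i\in U_2$ is symmetric) and let $\theta_t$ be the threshold the mechanism applies to $i$ at step $t$: for $t\ge 2$ this is $\rho^{\ast}_{U_2}$, which by Algorithm~\ref{alg:get_density_threshold} depends only on the deterministic sample budget and on the members of $U_2$ arrived by step $t$, hence not on $i$'s reported arrival time (as $i\notin U_2$); for $t=1$ it is the offline proportional-share threshold of Algorithm~\ref{alg:online_auction}, which is the single earliest selection opportunity and is present only in the truthful run. Under the sufficient-budget hypothesis the budget tests in Algorithm~\ref{alg:online_auction_for_each_group} never bind, so unrolling the incremental update shows that if $i$ is first selected at step $\tau$ his final payment is $p_i = Re_i\sum_{t=\tau}^{T} r_t$, where $r_t = \max\{\theta_s : \tau\le s\le t,\ \rho_i\le\theta_s\}$ is the running maximum of the qualifying thresholds; since step $\tau$ itself qualifies, $r_t\ge r_\tau=\theta_\tau\ge\rho_i$ for all $t\ge\tau$. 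Hence $u_i = p_i - c_i(T-\tau+1) = \sum_{t=\tau}^{T}(Re_i r_t - c_i)$ is a sum of non-negative terms, because $Re_i r_t\ge Re_i\rho_i=b_i=c_i$.

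It then remains to compare the two runs. If $t_A=\infty$, then because the consideration interval shrinks, $\{\hat{a_i}',\dots,T\}\subseteq\{a_i,\dots,T\}$, and the $\theta_t$ are invariant, we also get $t_B=\infty$, so $u_i^A=u_i^B=0$. Otherwise $t_A\le t_B$; writing $r_t^A,r_t^B$ for the two running maxima and using the invariance of $\theta_t$, for every $t\ge t_B$ the maximum defining $r_t^A$ is taken over a superset of the steps defining $r_t^B$, so $r_t^A\ge r_t^B\ge\rho_i$. Therefore $u_i^A-u_i^B = \sum_{t=t_A}^{t_B-1}(Re_i r_t^A - c_i) + \sum_{t=t_B}^{T}Re_i(r_t^A-r_t^B)\ge 0$ (the first sum is empty, and the identity reads $u_i^A\ge 0$, when $t_B=\infty$). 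So reporting the true arrival time maximizes $i$'s utility.

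The step I expect to be the main obstacle is the closed form $p_i=Re_i\sum_{t=\tau}^T r_t$: one must carefully unroll the running-maximum payment update of Algorithm~\ref{alg:online_auction_for_each_group} with the first-round payment $T\cdot Re_i\cdot\rho^{\ast}$ as base case, confirm the budget cap is inert under the sufficient-budget hypothesis, and check the superset comparison $r_t^A\ge r_t^B$ smoothly across the boundary between the $t=1$ offline rule and the $t\ge 2$ online rule (in particular that a worker selected at $t=1$ with density $\rho_i\le\rho^{\ast}$ still satisfies $\rho_i\le r_t^A$ thereafter). Everything after that is the telescoping computation above.
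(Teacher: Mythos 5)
Your proof is correct and follows essentially the same route as the paper's: a later reported arrival yields a weakly later first selection step, the truthful run collects a non-negative per-round surplus over the extra rounds, and its payment over the common rounds is weakly larger. The only difference is rigor: you supply the justifications the paper merely asserts (the invariance of the threshold applied to worker $i$ across the two runs, the running-maximum closed form $p_i = Re_i\sum_{t=\tau}^{T} r_t$ for the payment, and the reliance on $b_i=c_i$ to make each per-round term non-negative), but the underlying argument is the same.
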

\begin{proof}
If a worker is selected at time step $t$, he will be paid a maximum reward that can be earned during $[t, T]$. If reporting the true arrival time $a_i$ makes worker $i$ to be selected at time step $t$, and reporting a later arrival time $\hat{a_i}$ makes him to be selected at time step $t^{\prime}$, then $t \leq t^{\prime}$. Reporting $a_i$ allows worker $i$ to receive an extra reward during $[t, t^{\prime}]$, but reporting $\hat{a_i}$ does not. As long as $(a_i, c_i)$ are submitted, the utility of worker $i$ is greater than that of reporting $\hat{a_i}$. Therefore, the mechanism satisfies time truthfulness with a sufficient budget.
\end{proof}

\section{Experiments}
\label{sec:experiments}
We use the MNIST dataset with a two-layer fully connected model which has a hidden layer of 50 cells, and the Fashion MNIST dataset with a LeNet model for experiments. The task publisher has a validation set and a test set of size 5000 each. Each worker has a training set of size 1000. The dataset is iid, but its accuracy may not be the same, which is achieved by modifying the labels to others. Workers train the model for 1 epoch with a learning rate of 0.05 and a batch size of 128 at each global iteration . Each task contains 10 global iterations. We set the minimum number of workers to start a task to be 1, and the first-round budget ratio to be 0.35.

We set up multiple benchmarks. The first, Fixed Threshold, is online using a fixed threshold $\rho^{\ast} = 0.75$. The second is RRAFL, proposed by Zhang \textit{et al.}~\shortcite{zhang2021incentive}. The third, Vanilla FL, randomly selects workers as long as the budget remains. The fourth, Bid Greedy, prefers workers with low bids. The fifth is Proportional Share, proposed by Singer \textit{et al.}~\shortcite{singer2010budget}. The sixth is Approximate Optimal, where the task publisher has prior information on all workers, such as $c_i$. The above five are offline mechanisms.

\begin{figure}[bt]
  \centering
  \centerline{\includegraphics[width=\linewidth]{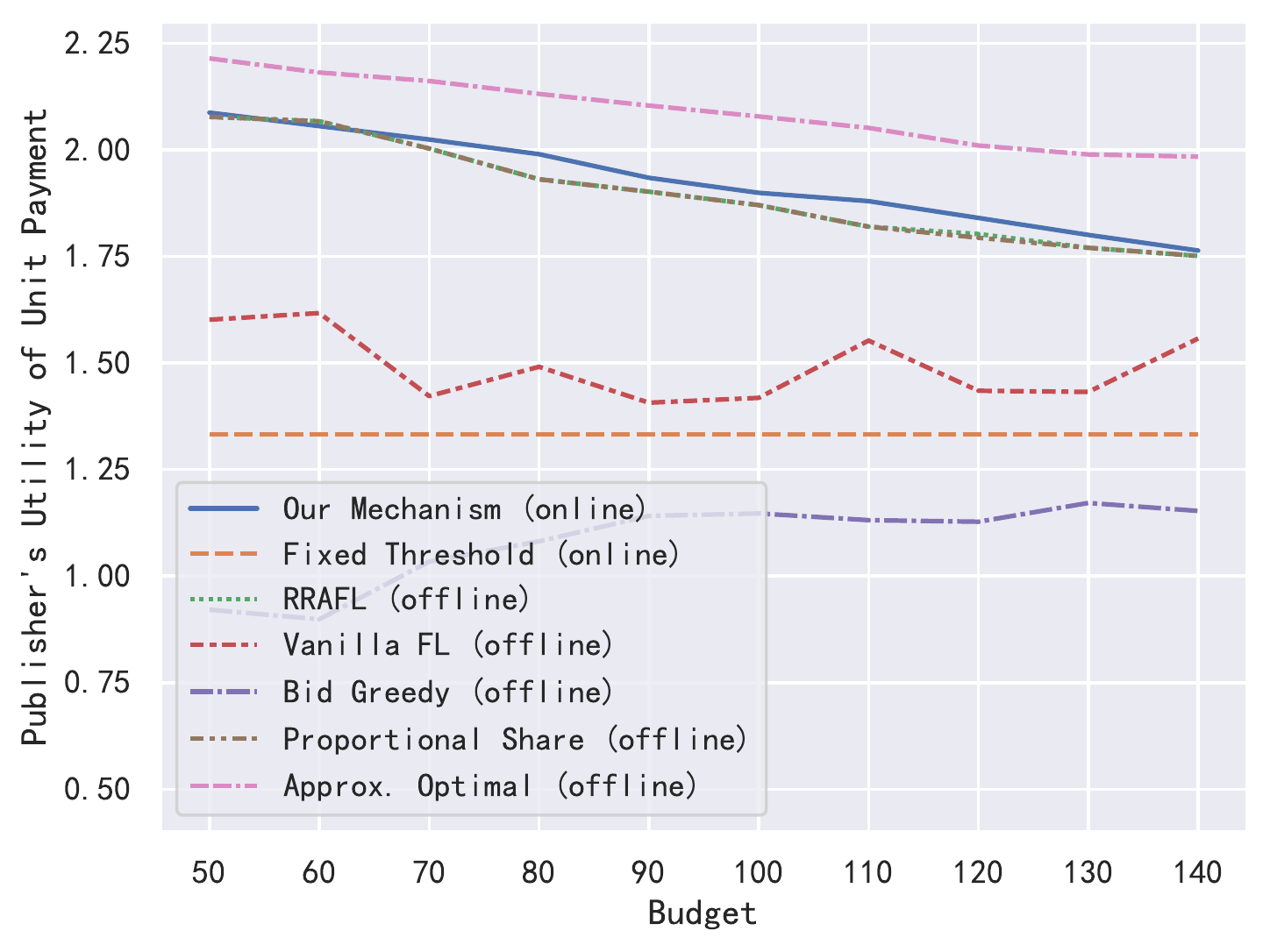}}
  \caption{Impact of budget on the publisher's utility of unit payments}
  \label{fig:budget_compare}
\end{figure}

\begin{figure}[bt]
  \centering
  \centerline{\includegraphics[width=\linewidth]{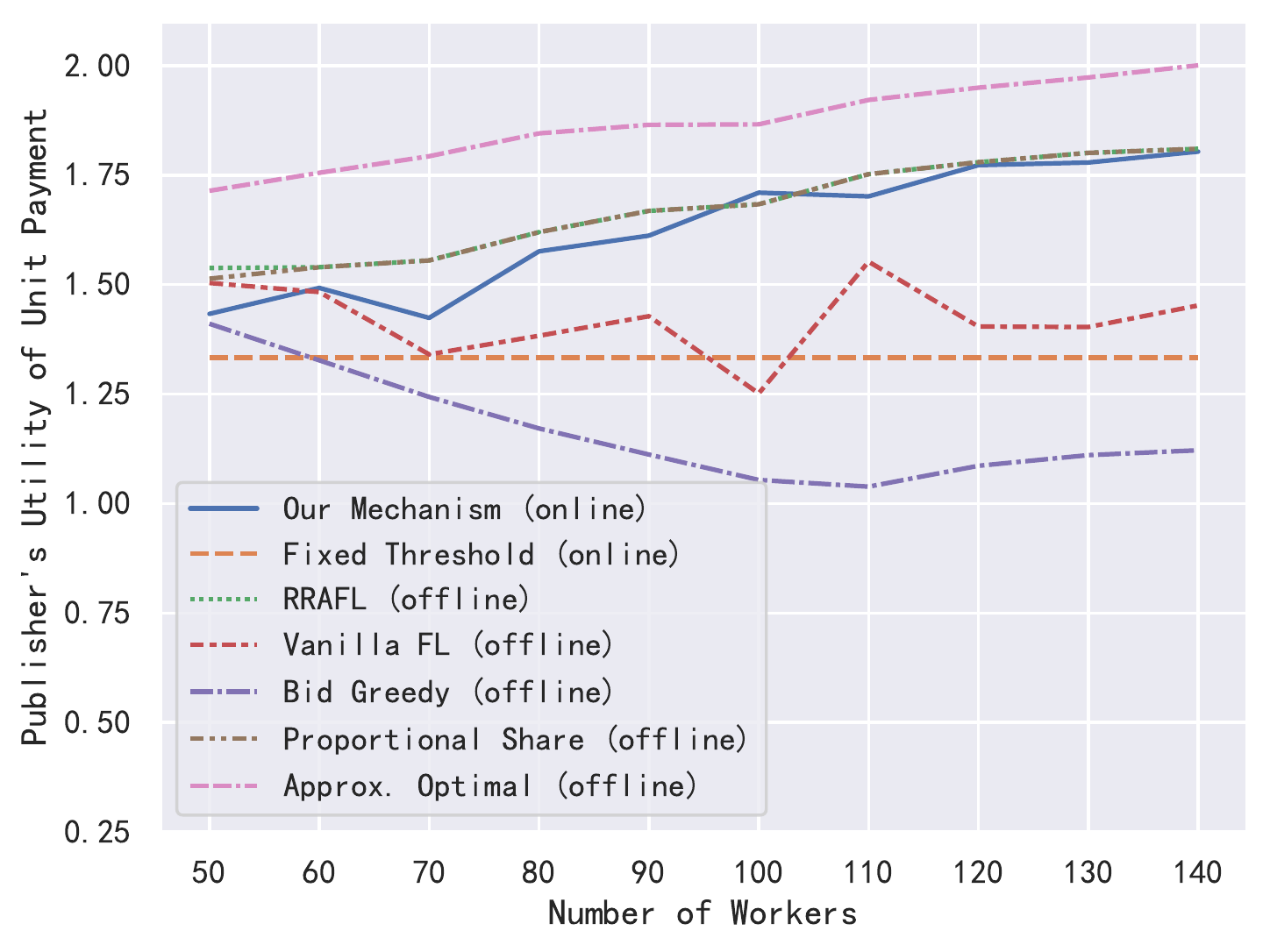}}
  \caption{Impact of the number of workers on the publisher's utility of unit payments}
  \label{fig:worker_cnt_compare}
\end{figure}

First, we set up 100 workers, whose accumulated reputation  $Re_i$, bid price $b_i$ and internal reputation $re_i$ are randomly generated from $[0, 1]$, $[\frac{1}{3}Re_i+\frac{1}{15}, \frac{1}{3}Re_i+\frac{4}{15}]$, $[\max(0, Re_i - 0.1), \min(1, Re_i+0.1)]$, respectively, and the probability of arriving at each time step $t$ is $\frac{1/t}{\sum_ {t}1/t}$. Figure~\ref{fig:budget_compare} shows the effect of budget on the unit payment utility of the task publisher. Figure~\ref{fig:worker_cnt_compare} shows the impact of the number of workers when the budget is 125. The offline Approx. Optimal mechanism outperforms ours because of holding all prior information. Our mechanism achieves cost and time truthfulness by sacrificing utility, but its result is still close to the offline RRAFL and Proportional Share mechanisms, and significantly outperforms other benchmarks. We can infer that our mechanism can improve the unit payment utility of the task publisher as much as possible  in the online scenario.

\begin{table}[tb]
  \centering
  \begin{tabular}{ccc}
  \toprule
  \textbf{Mechanism} & \textbf{MNIST} & \textbf{Fashion MNIST}  \\
  \toprule
  \textbf{Our Mechanism}    & \textbf{1.0000} & \textbf{0.9411}  \\
  \textbf{Fixed Threshold}  & 1.0000 &  0.9175 \\
  \textbf{RRAFL}            & 1.0000 & 0.8956 \\
  \textbf{Vanilla FL}       & 0.4770 & 0.4770 \\
  \textbf{Bid Greedy}       & 0.3296 & 0.3296  \\
  \bottomrule
  \end{tabular}
  
  \caption{Proportion of workers with a data accuracy of 1.0 among the selected workers}
  \label{tab:proportion_acc1}
\end{table}

Then we up set 30 workers. 15, 5, 5, and 5 workers are with data accuracy $dacc$ of 1.0, 0.7, 0.4, and 0.1, respectively. Their bids are generated from $[\frac{1}{3}dacc+\frac{1}{15}, \frac{1}{3}dacc+\frac{4}{15}]$. Run 70 tasks for the MNIST dataset and Fashion MNIST dataset respectively, and use the last 65 tasks to evaluate the mechanism . The budget in each task is 80. From Table~\ref{tab:proportion_acc1}, we can observe that the proportion of workers with a data accuracy of 1.0 selected by our mechanism remains high whether it is the MNIST task or the Fashion MNIST task. We can infer that our mechanism can help select high-quality workers in online scenarios.

\begin{table}[htbp]
  \centering
  \begin{tabular}{ccc}
  \toprule
  \textbf{Mechanism} & \textbf{MNIST} & \textbf{Fashion MNIST}  \\
  \toprule
  \textbf{Our Mechanism}    & 0.4393 & \textbf{1.1162}  \\
  \textbf{Fixed Threshold}  & 0.4993 & 1.4971  \\
  \textbf{RRAFL}            & \textbf{0.4389} & 1.1376 \\
  \textbf{Vanilla FL}       & 0.5022 & 1.4566 \\
  \textbf{Bid Greedy}       & 0.5452 & 1.6925  \\
  \bottomrule
  \end{tabular} 
  
  \caption{Average loss of global models with different mechanisms}
  \label{tab:loss}
\end{table}

Table~\ref{tab:loss} illustrates that the average loss of the global model in our mechanism is close to the offline mechanism with the best results, and substantially lower than that in the offline Vanilla FL and Bid Greedy mechanisms. This suggests that our mechanism can help to improve the model quality in the online scenarios as in the offline scenario.

\section{Conclusion}
We designed an online auction-based incentive mechanism for horizontal federated learning to help the task publisher select and pay workers who arrive one by one online. The task is divided into multiple time steps. At the first time step, workers are selected through the proportional share mechanism. At other time steps, two worker sets are mutually used as sample sets to calculate the payment density thresholds, which are used to make decisions on the two worker sets respectively. Finally, theoretical analysis proves that our online mechanism satisfies six economic properties. The experimental results show its effectiveness.

% \section{Tables}

% Tables are considered illustrations containing data. Therefore, they should also appear floated to the top (preferably) or bottom of the page, and with the captions below them.

% \begin{table}
% \centering
% \begin{tabular}{lll}
% \hline
% Scenario  & $\delta$ & Runtime \\
% \hline
% Paris       & 0.1s  & 13.65ms     \\
% Paris       & 0.2s  & 0.01ms      \\
% New York    & 0.1s  & 92.50ms     \\
% Singapore   & 0.1s  & 33.33ms     \\
% Singapore   & 0.2s  & 23.01ms     \\
% \hline
% \end{tabular}
% \caption{Latex default table}
% \label{tab:plain}
% \end{table}

% \begin{table}
% \centering
% \begin{tabular}{lrr}
% \toprule
% Scenario  & $\delta$ (s) & Runtime (ms) \\
% \midrule
% Paris       & 0.1  & 13.65      \\
%             & 0.2  & 0.01       \\
% New York    & 0.1  & 92.50      \\
% Singapore   & 0.1  & 33.33      \\
%             & 0.2  & 23.01      \\
% \bottomrule
% \end{tabular}
% \caption{Booktabs table}
% \label{tab:booktabs}
% \end{table}

% %% The file named.bst is a bibliography style file for BibTeX 0.99c
\bibliographystyle{named}
\bibliography{ijcai22}

\end{document}